\documentclass[11pt]{article}
\usepackage{amsmath,amssymb,amsfonts,amsthm,array,hhline,setspace,graphicx,textcomp,overpic}
\usepackage{comment}
\usepackage[usenames]{color}

\bibliographystyle{unsrt}

\specialcomment{genescomments}{}{}

\includecomment{genescomments}

\title{Asymptotic two-soliton solutions solutions in the Fermi-Pasta-Ulam model}
\author{Aaron Hoffman and C.E. Wayne \\ 
Boston University \\
Department of Mathematics and Statistics \\
111 Cummington St. \\
Boston, MA 02135}
\date{}
\setlength{\topmargin}{0.0 in}
\setlength{\oddsidemargin}{-0.2in}
\setlength{\evensidemargin}{-0.2in}
\setlength{\textheight}{9.0in}
\setlength{\textwidth}{6.0in}
\setlength{\marginparwidth}{0.0in}
\setlength{\marginparsep}{0.0in}
\setlength{\marginparpush}{0.0in}
\setlength{\footskip}{0.5in}

\newcommand{\eps}{\varepsilon}
\newcommand{\be}{\begin{equation}}
\newcommand{\ee}{\end{equation}}
\newcommand{\bea}{\begin{eqnarray}}
\newcommand{\eea}{\end{eqnarray}}

\newcommand{\ba}{\begin{array}}
\newcommand{\ea}{\end{array}}
\newcommand{\R}{\mathbb{R}}
\newcommand{\Z}{\mathbb{Z}}

\newtheorem{thm}{Theorem}[section]
\newtheorem{lemma}[thm]{Lemma}

\newtheorem*{rmk}{Remark}

\begin{document}
\maketitle
\abstract{We prove the existence of asymptotic two-soliton states in the Fermi-Pasta-Ulam model with general interaction potential.  That is, we exhibit solutions whose difference in $\ell^2$ from the linear superposition of two solitary waves goes to zero as time goes to infinity.}

\section{Introduction}
In a series of recent works, (e.g. \cite{martel:2001,martel:2002,martel:2005,martel:2006}) Martel, Merle, and collaborators have studied solutions of the generalized Korteweg DeVries (gKdV) equation 
\begin{equation}
 u_t = (u_{xx} + u^p)_x; \qquad x \in \R; \qquad p = 2,3,4,5. 
\label{eq:gKdV}
\end{equation}
It is a long standing observation that gKdV admits a family of solitary waves $u(x,t) = u_c(x - ct)$ in which larger waves move faster than smaller waves.  When $p = 2,3$ the system is completely integrable and one may construct 
\begin{quote}
{\it pure $n$-soliton solutions}, i.e. solutions whose difference from the linear superposition of $n$ independent solitary waves, each moving with its own characteristic speed, goes to zero in e.g. $H^1$ as time goes to either forward or backward infinity.  
\end{quote}
An important and long-standing observation about these solutions is that when a larger, faster soliton overtakes a smaller, slower soliton the collision is elastic - there is no energy lost to dispersion.  It has long been an open question as to what extent this phenomenon persists for non-integrable Hamiltonian systems.  When $p = 4,5$ the gKdV equation is believed to be non-integrable and thus the inverse scattering transform and other methods for constructing pure $n$-soliton solutions are unavailable.  However, one may (and Martel does \cite{martel:2005}) construct 
\begin{quote}
{\it asymptotic $n$-soliton states}, i.e. solutions whose long time behavior (in forward but not necessarily backward time) is precisely the same as that of the linear superposition of $n$ independent solitary waves, each moving with its own characteristic speed.  
\end{quote}
The construction of such solutions does not address the question of whether or not the collision is elastic, but it is a crucial step in the construction of 
\begin{quote}
{\it nearly $n$-soliton solutions}: Solutions whose long time behavior is precisely the same the the linear superposition of one collection of $n$ solitary waves as time goes backward to infinity and precisely the same (when measured in a norm which ignores dispersion) as the linear superposition of a different, and likely lower energy, collection of solitary waves as time goes forward to infinity.  
\end{quote}
The difference in the energy of the two asymptotic $n$-soliton states accounts for the energy which has been lost to dispersion as the solitons which comprise the nearly $n$-soliton solutions collide inelastically. 

One limitation of the gKdV equation is that it supports wave propagation in only one direction.
More realistic models, like the equations for water waves allow waves to propagate in any
direction.  Another class of equations which supports two-wave wave propagation is
the Fermi-Pasta-Ulam system of Hamiltonian differential equations:
\begin{equation}
\left\{ \ba{l}
\dot{r} = (S - I)p \\ \\
\dot{p} = (I - S^{-1})V'(r)
\ea \right.
\label{eq:FPU}
\end{equation}
where $u = (r,p) \in \ell^2 \times \ell^2$, $(Sx)_n = x_{n+1}$ is the shift on $\ell^2$, and $V : \R \to \R$ is an interaction potential which satisfies $V(0) = V'(0)  = 0$, $V''(0) > 0$, and $V'''(0) \ne 0$.  
(In fact, it is simple to show that one can assume, without loss of generality that $V''(0) =1$ and we
will assume this normalization from now on.) 
An extensive theory of small amplitude, long wavelength solitary waves has been developed for $\eqref{eq:FPU}$.  In particular, waves with amplitude $\sim \eps^2$, wavelength $\sim \eps^{-1}$, and wavespeed $\sim 1 + \eps^2$ exist \cite{friesecke:1994}, are exponentially localized and supersonic \cite{friesecke:1999}, and stable in a variety of senses of the word \cite{friesecke:2004, mizumachi:2007}.  We shall denote the solitary wave with speed $c = \pm(1 + \eps^2/24)$ by $u_c = (r_c,p_c)$. 
\begin{rmk}  Of course, any translate of the solitary wave $u_c$ is also a solitary wave of
the same speed.  For definiteness, we assume that $u_c$ is the member of this
family for which the profile $r_c$ has its (unique) maximum at zero.
\end{rmk}

We have recently investigated the long-time behavior of counterpropagating solitary waves in the small-amplitude long-wavelength regime of the FPU model \cite{hoffman:2008}.  In particular we have demonstrated that solutions exist which are close to the linear superposition of two solitary waves for all time and that moreover these solutions are asymptotically stable with respect to perturbations which are exponentially localized and orbitally stable with repsect to perturbations which live in $\ell^2$.  The purpose of this note is to prove the existence of asymptotic counterpropagating $2$-soliton states in the long-wavelength, low-amplitude regime of the FPU model.

\begin{thm} \label{thm:main}
The FPU model admits asymptotic two soliton states.  That is, given $c_-$ sufficiently close to and less than $-1$ and $c_+$ sufficiently close to and greater than $1$, there is an initial datum $\bar{u}_0 \in \ell^2$ such that the solution of $\eqref{eq:FPU}$ with initial data $u(0) = \bar{u}_0$ satisfies
\begin{equation}
\lim_{t \to \infty} \| u(t) - u_{c_+}(\cdot - c_+ t - \gamma_+(t)) - u_{c_-}(\cdot - c_- t - \gamma_-(t))\| = 0
\label{eq:thmest}
\end{equation}
for some functions $\gamma_\pm$ which satisfy $\lim_{t \to \infty} \dot{\gamma}(t) = 0$.
\end{thm}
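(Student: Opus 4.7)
The plan is to implement the Martel--Merle compactness scheme from \cite{martel:2005} in the FPU setting, using the stability machinery developed in our companion paper \cite{hoffman:2008} as the key analytic input. The idea is to construct the asymptotic two-soliton state as a limit of a sequence of solutions which are \emph{exactly} equal to the superposition of two solitary waves at a sequence of later and later times. Concretely, pick a sequence $T_n \to \infty$ and, for each $n$, let $U_n(t)$ denote the (backward) solution of \eqref{eq:FPU} with terminal data
\begin{equation}
U_n(T_n) = u_{c_+}(\cdot - c_+ T_n) + u_{c_-}(\cdot - c_- T_n).
\end{equation}
Since the two solitary waves are essentially disjointly supported at time $T_n$ (because $|c_+ - c_-|T_n$ is large while each profile is exponentially localized on scale $\eps^{-1}$), $U_n(T_n)$ lies deep in the two-soliton manifold. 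The task is then to extract a subsequential limit of $\{U_n(0)\}_{n\in\N}$ and verify that the corresponding forward-in-time solution fulfills \eqref{eq:thmest}.

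The heart of the argument is a uniform-in-$n$ closeness estimate of the form
\begin{equation}
\| U_n(t) - u_{c_+}(\cdot - c_+ t - \g_+^n(t)) - u_{c_-}(\cdot - c_- t - \g_-^n(t)) \|_{\ell^2} \leq K e^{-\theta (T_n - t)}, \qquad 0 \leq t \leq T_n,
\label{eq:key}
\end{equation}
for suitable modulation parameters $\g_\pm^n(t)$ and constants $K,\theta>0$ independent of $n$. I would establish \eqref{eq:key} by a bootstrap/continuous-induction argument running backward from $t=T_n$: assume the left-hand side is bounded by, say, $\eps^{3/2}$ on an interval $[t_*,T_n]$, use the implicit function theorem to choose $\g_\pm^n(t)$ so that the ``error'' $w_n(t) = U_n - \sum u_{c_\pm}(\cdot - c_\pm t - \g_\pm^n(t))$ satisfies standard orthogonality conditions with respect to the symmetry modes of the two waves, then invoke a coercive Lyapunov functional built from localized energy/momentum observables to control $\|w_n(t)\|_{\ell^2}$. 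Since before the collision the two waves are well separated for all $t \in [0,T_n]$, one can run a monotonicity argument with a weight that transitions between $c_+$ and $c_-$ (this is the discrete analog of Martel--Merle's virial weights, and is already developed in \cite{hoffman:2008}); this yields a differential inequality for an appropriate norm of $w_n$ which, integrated backward from $w_n(T_n)=0$, gives precisely the exponential factor $e^{-\theta(T_n-t)}$ and closes the bootstrap.

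Once \eqref{eq:key} is in hand, evaluating at $t=0$ shows that $\{U_n(0)\}$ is Cauchy in $\ell^2$; let $\bar u_0$ be its limit. Continuous dependence on initial data for \eqref{eq:FPU} on $\ell^2$ (standard, via Picard iteration on the local well-posedness of the lattice ODE) then passes \eqref{eq:key} to the forward solution $u(t)$ with $u(0)=\bar u_0$: on any fixed interval $[0,T]$ we have $u(t) = \lim_n U_n(t)$, and along a subsequence of $n$ with $T_n \to \infty$ the modulation parameters $\g_\pm^n(t)$ converge (uniformly on compacts) to limiting $\g_\pm(t)$ which inherit the bound $\dot\g_\pm^n \to 0$ as $t\to\infty$ from the differential identities arising in the modulation. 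Letting $T \to \infty$ in \eqref{eq:key} (with $t=T$ and $T_n \gg T$) yields \eqref{eq:thmest}.

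The principal obstacle is the proof of the exponential-decay estimate \eqref{eq:key}. Unlike the gKdV setting, in FPU the linearized operator around a single solitary wave is not of a classical Schr\"odinger form, and the coercivity/monotonicity arguments have to be implemented directly on the lattice; in particular the localized momentum weight must be chosen with enough regularity and slow enough transition rate that the nonlinear error terms generated when the two waves are close can be absorbed by the main coercive term. The counterpropagating (rather than co-propagating) geometry is what makes this feasible: the waves are in contact only over a short transition window near $t=0$, while for $t \gg 1$ their supports separate at a uniform rate $|c_+ - c_-|$, which is exactly what powers the exponential gain in \eqref{eq:key}. Establishing the appropriate coercivity of the Lyapunov functional, relative to the orthogonality conditions and uniformly in the wave separation, is the main piece of analytic work; the machinery of \cite{hoffman:2008} supplies this at the single-wave level, and the two-wave version follows by the standard localization recipe.
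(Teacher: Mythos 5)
Your overall scheme --- solve backward from data that is \emph{exactly} a two-soliton superposition at times $T_n \to \infty$, prove a uniform closeness estimate, and pass to a limit --- is the same as the paper's (the paper phrases the backward evolution as forward evolution of time-reversed, incoming waves, but by reversibility of \eqref{eq:FPU} the two descriptions coincide). However, two of your central claims fail as stated. First, the exponent in your key estimate points the wrong way. In your parametrization the error vanishes at $t = T_n$, where the waves are separated by $(c_+ - c_-)T_n$, and the waves collide near $t = 0$. The interaction term driving the error at time $s$ is of size $e^{-b\eps(c_+ - c_-)s}$ (exponentially small in the \emph{current} separation), so integrating backward from $T_n$ one obtains an error at time $t$ bounded by $C e^{-\theta \eps t}$, \emph{not} $C e^{-\theta(T_n - t)}$: the error grows as you approach the collision and saturates at a size comparable to the waves themselves near $t = 0$; no bootstrap can make it exponentially small in $T_n$ there. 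This is exactly the content of the paper's Lemma \ref{lem:dwe}, whose bound $\mathbf{E}(t) \le \tilde C \eps^3 e^{-b\eps(N - t)}$ is small when the separation $(c_+ - c_-)(N - t)$ is large and degenerates at the collision time $t = N$; accordingly the construction stops at $t = N - T_0$, a fixed separation \emph{before} the collision, rather than at the collision time.

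Second, with the corrected estimate, evaluating at a fixed time $T_*$ gives only a bound $C e^{-\theta \eps T_*}$ that is uniform in $n$ --- this yields boundedness of the sequence, not Cauchyness, so your limit $\bar u_0$ is not yet constructed. The paper closes precisely this gap with a compactness argument that is absent from your proposal: the residual is controlled not only in $\ell^2$ but in the exponentially weighted norms $\|\cdot\|_\pm$ adapted to the two wave speeds, and Lemma \ref{lem:cpt} shows that bounded sets of $\ell^2 \cap \ell^2_+ \cap \ell^2_-$ are precompact in $\ell^2$, giving strong subsequential convergence. You would need either this or a direct backward-stability comparison of $U_n$ with $U_m$, which you have not supplied. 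Two further remarks: (a) the modulation/orthogonality/coercivity machinery you propose for the pre-collision phase is unnecessary --- since the residual starts from zero and the waves are counterpropagating, a bare Gronwall estimate for the driven discrete wave equation in the weighted norms suffices (Lemma \ref{lem:dwe}), with the cross term $r_{c_+} r_{c_-}'$ shown exponentially small in Lemma \ref{lem:small2}; no linearization about the solitary waves is needed there. (b) To conclude \eqref{eq:thmest} for the forward evolution of $\bar u_0$, the paper does invoke the two-wave orbital stability result (Theorem \ref{thm:os}) at a sequence of times $T_k \to \infty$ so that the $e^{-b\eps T_k}$ error terms vanish in the limit; your plan to pass the backward estimate directly to the limit requires the strong convergence from (b) above together with convergence of the modulation parameters, neither of which is free.
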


\begin{rmk} Here, and in what follows, $\| \cdot \|$ refers to the $\ell^2(\Z)$ norm.
\end{rmk}

At the heart of this existence result is the following orbital stability result which was proven in \cite{hoffman:2008}.
\begin{thm} \label{thm:os}
There are positive constants $C$, $T_0$, $b$, $\delta_0$, and $\eps_0$ such that for any $\eps \in (0,\eps_0)$ the following hold.  Suppose that 
\[ \| u_0(\cdot) - u_{c_+}(\cdot - \tau_+^*) - u_{c_-}(\cdot - \tau_-^*) \| < \delta < \delta_0 \]
with $\tau_-^* < -T < T < \tau_+^*$ and $T > T_0$.  Then there are smooth real valued functions of a real variable $c_+$, $c_-$, $\tau_+$, and $\tau_-$ such that the solution $u$ of $\eqref{eq:FPU}$ with $u(0) = u_0$ satisfies
\begin{equation}
 \| u(t,\cdot) - u_{c_+(t)}(\cdot - \tau_+(t)) - u_{c_-(t)}(\cdot - \tau_-(t)) \| < C\eps^{-3/2}\delta + Ce^{-b \eps T}
 \label{eq:os1}
 \end{equation}
 for all $ t > 0$ 
Moreover, 
\begin{equation}
 |c_+(t) - c_+(t_0)| + |c_-(t) - c_-(t_0)| < C\eps^{-4}\delta +Ce^{-b \eps T}. 
 \label{eq:os2}
 \end{equation}

Finally, the limit $\lim_{t \to \infty} (c_+(t),c_-(t),\dot{\tau}_+(t),\dot{\tau}_-(t))$ exists with
\begin{equation}
| \dot{\tau}_\pm(t) - c_\pm(t)| \le C\eps^{-3/2}\delta + Ce^{-b \eps T}
\label{eq:os3}
\end{equation}
holding uniformly for $t > 0$ and $\lim_{t \to \infty} |\dot{\tau}_\pm(t) - c_\pm(t)| = 0$ 
\end{thm}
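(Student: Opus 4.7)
The strategy is modulational orbital-stability analysis adapted to two counterpropagating waves on a lattice, extending the single-soliton framework of Friesecke-Pego \cite{friesecke:2004} and Mizumachi \cite{mizumachi:2007}. First I would decompose
\[
u(t) = u_{c_+(t)}(\cdot - \tau_+(t)) + u_{c_-(t)}(\cdot - \tau_-(t)) + w(t),
\]
and fix the four parameters by requiring that $w(t)$ be symplectically orthogonal to the four tangent vectors $\partial_c u_{c_\pm}(\cdot-\tau_\pm(t))$ and $\partial_\tau u_{c_\pm}(\cdot-\tau_\pm(t))$. For data close enough to the two-soliton manifold, the implicit function theorem produces such a decomposition uniquely: the relevant Jacobian is block-diagonal up to inner products of $c_+$- and $c_-$-centered profiles, which are exponentially small in $\eps$ times the soliton separation. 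Because $c_- < 0 < c_+$, the centers drift apart, so the running separation $T(t) \gtrsim T + \eps^2 t$ only grows, and the decomposition persists for all $t>0$ provided $\|w(t)\|$ stays small.

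Differentiating the orthogonality conditions along \eqref{eq:FPU} yields a $4 \times 4$ system of modulation ODEs for $(\dot c_\pm, \dot\tau_\pm - c_\pm)$ whose right-hand sides are $O(\|w\|^2) + O(e^{-b\eps T(t)})$, the interaction piece reflecting the spatial decay at rate $\eps$ of the solitary waves and of their tangent vectors. To control $\|w\|$ I would introduce a Lyapunov functional of Martel-Merle type
\[
\mathcal{L}(u; c_\pm, \tau_\pm) = H(u) - c_+ P_+(u;\tau_+) - c_- P_-(u;\tau_-),
\]
where $H$ is the FPU Hamiltonian and $P_\pm$ are momentum functionals localized near each soliton center (a construction of this kind in the FPU setting is the device carried out in \cite{hoffman:2008}). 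Granting the coercivity of the second variation of $\mathcal{L}$ on the symplectic complement of the tangent space, differentiating $\mathcal{L}(u(t);c_\pm(t),\tau_\pm(t))$ along the flow produces $\frac{d}{dt}\mathcal{L} = O(\|w\|^3) + O(\|w\|\,e^{-b\eps T(t)})$, and a standard bootstrap closes \eqref{eq:os1}; the factor $\eps^{-3/2}$ arises from extracting $\|w\|$ from a quadratic form whose coercivity constant is $\sim \eps^3$ against the natural scale $\|u_c\|\sim \eps^{3/2}$ of the solitary wave. Feeding this $\|w\|$-bound back into the modulation ODEs and using that $e^{-b\eps T(t)}$ is integrable in $t$ (because $T(t)\gtrsim T+\eps^2 t$) gives \eqref{eq:os2}, the existence of $\lim_{t\to\infty}(c_\pm,\dot\tau_\pm)$, and the bound \eqref{eq:os3} on $|\dot\tau_\pm - c_\pm|$.

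The principal technical obstacle is the uniform-in-$\eps$ coercivity of the second variation of $\mathcal{L}$ on the symplectic complement of the four-dimensional tangent space. This is a singular perturbation: in the long-wavelength, low-amplitude regime the FPU solitary wave is essentially a rescaled KdV soliton, so the correct $\sim \eps^3$ spectral gap must be extracted from a KdV-level coercivity statement via quantitative lattice-to-continuum estimates uniform in the asymptotic parameter. The two-soliton version adds the further complication of controlling the off-diagonal block that couples the two waves, whose smallness rests on sharp exponential decay for $u_{c_\pm}$ and for the generalized eigenfunctions of the single-soliton linearization, together with the observation that a function symplectically orthogonal to one soliton's tangent space is nearly orthogonal to the other's when the separation is large. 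Once this two-soliton coercivity is in hand, the remainder of the argument — the modulation ODE bookkeeping and the Lyapunov bootstrap — is a mostly mechanical extension of the single-soliton case.
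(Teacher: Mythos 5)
The first thing to note is that the paper does not actually prove this theorem: it is imported wholesale from the authors' earlier work, with the one-line justification that it ``follows by combining the results of Theorem 2.2, Theorem 4.5, and Lemma 3.7 from \cite{hoffman:2008}.'' So there is no internal proof to compare against; the relevant comparison is with the argument of \cite{hoffman:2008}. Your sketch is in the right family --- decomposition via orthogonality conditions, modulation ODEs, an energy functional, a bootstrap --- and plausibly accounts for where the scales $\eps^{-3/2}$ and $\eps^{-4}$ come from. But it is closer in flavor to the Martel--Merle gKdV machinery than to what the cited reference actually does, which is built on the Friesecke--Pego/Mizumachi framework of exponentially weighted norms translating with each wave.

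Two points in your write-up are genuine gaps rather than omitted routine detail. First, you explicitly ``grant'' the uniform-in-$\eps$ coercivity of the second variation of $\mathcal{L}$ on the symplectic complement of the tangent space; you correctly identify this as the principal technical obstacle, but a proof that assumes its principal obstacle is an outline, not a proof. Second, and more structurally: your claim that $\frac{d}{dt}\mathcal{L} = O(\|w\|^3) + O(\|w\|\, e^{-b\eps T(t)})$ relies on the localized momenta $P_\pm$ being almost conserved along the flow. In gKdV this follows from Martel--Merle monotonicity, which exploits the fact that the equation transports mass in only one direction. FPU supports counterpropagating waves --- that is the entire point of this paper --- so radiation shed near one soliton can travel toward the other, and the naive monotonicity of a localized momentum fails. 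Controlling that bidirectional transport is precisely what the moving exponential weights in \cite{hoffman:2008} (and, in miniature, in Lemma \ref{lem:dwe} of the present paper) are designed to do, and your proposal does not engage with it.
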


Theorem $\ref{thm:os}$ follows by combining the results of Theorem 2.2, Theorem 4.5, and Lemma 3.7 from \cite{hoffman:2008}.

\section{Preliminaries}
In this section we collect several lemmata which streamline the proof of Theorem $\ref{thm:main}$.  The first such lemma provides an estimate on the residual of the linear superposition of two solitary waves which is valid on exponentially long time scales.  The application of this lemma is a key step in the proof of Theorem $\ref{thm:main}$.  The other lemmas in this section serve to verify the smallness condition $\eqref{eq:gsmall}$ in the hypotheses of Lemma $\ref{lem:dwe}$.

\begin{lemma} \label{lem:dwe}
Consider the discrete wave equation which is driven, from rest, by small inhomogeneous and nonlinear terms:
\begin{equation}
 \left\{ \ba{l} \dot{\rho} = (S - I)\pi \\ \\ \dot{\pi} = (I - S^{-1})\rho + g(\pi,\rho,t) \\ \\
\rho(0) = \pi(0) = 0
\ea \right.
\label{eq:dwe}
\end{equation}

Let $N$ and $c_\pm$ be given real numbers.  Define $N_\pm := c_\pm N$.  Let $\langle x, y \rangle := \sum_{j \in \Z} x_j y_j$ denote the usual $\ell^2$ inner product and define weighted inner products by 
$\langle x, y \rangle_\pm := \sum_{j \in \Z} e^{\pm 2\eps^{3/2}(j - (N_\pm - c_\pm t))} x_j y_j = e^{\mp 2 \eps^{3/2}(N_\pm-c_\pm t)} \sum_{j \in \Z} e^{\pm 2\eps^{3/2}j} x_j y_j$.  Define the ``energy'' of the solution $(\rho,\pi)$ of $\eqref{eq:dwe}$ by 
\[
{\bf E}(t) := \frac{1}{2}\left(\|\rho(t)\|^2 + \|\rho(t)\|_+^2 + \|\rho(t)\|_-^2 + \|\pi(t)\|^2 + \|\pi(t)\|_+^2 + \|\pi(t)\|_-^2\right)
\]

Given positive constants $C$, $\tilde{C}$, $\eta$, and $b$, there is a positive constant $\eps_0$ such that as long as $g$ satisfies the smallness condition
\begin{equation}
\sum_{\diamond \in \{\ell^2,+,-\}}\| g(\pi,\rho,t)\|_\diamond \|\pi(t)\|_\diamond \le C\left[ (\eps^{3/2} + {\bf E}(t)^{1/2}){\bf E}(t) + \eps^{4+\eta} e^{-b\eps(N - t)}\right] 
\label{eq:gsmall}
\end{equation}
for all $t < N$ and for some $\eps \in (0, \eps_0)$, then the energy ${\bf E}$ satisfies the estimate 
\begin{equation}
{\bf E}(t) \le \tilde{C} \eps^3 e^{- b \epsilon (N-t)} \qquad t \in [0,N]
\label{eq:lem1est}
\end{equation}
\end{lemma}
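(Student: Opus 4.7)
The plan is to derive a differential inequality of the form $\mathbf{E}'(t) \le C\eps^{3/2}\mathbf{E}(t) + C\mathbf{E}(t)^{3/2} + C\eps^{4+\eta}e^{-b\eps(N-t)}$ and then close the estimate by a continuity/bootstrap argument, exploiting the initial condition $\mathbf{E}(0)=0$.

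First I would differentiate each of the six quadratic pieces of $\mathbf{E}$ using \eqref{eq:dwe}. For the unweighted pair, the antisymmetry $(S-I)^* = -(I-S^{-1})$ in $\ell^2$ forces the linear wave terms to cancel exactly, leaving only $\langle \pi, g\rangle$. For the weighted pairs two additional contributions appear. Writing $w_\pm(j,t) = e^{\pm 2\eps^{3/2}(j - N_\pm + c_\pm t)}$, the time dependence of the weight produces a ``drift'' term $\eps^{3/2}(\pm c_\pm)(\|\rho\|_\pm^2 + \|\pi\|_\pm^2)$ of size $O(\eps^{3/2})\mathbf{E}(t)$ (both signs contribute positively since $c_+ > 0$ and $c_- < 0$). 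A short summation-by-parts calculation shows that the adjoint of $S$ with respect to $\langle\cdot,\cdot\rangle_\pm$ is $e^{\mp 2\eps^{3/2}}S^{-1}$, so
\[
\langle (S-I)\pi,\rho\rangle_\pm + \langle\pi,(I-S^{-1})\rho\rangle_\pm = (e^{\mp 2\eps^{3/2}}-1)\langle\pi,S^{-1}\rho\rangle_\pm,
\]
whose magnitude is bounded by $C\eps^{3/2}\|\pi\|_\pm\|\rho\|_\pm \le C\eps^{3/2}\mathbf{E}(t)$. Combining these observations with the hypothesis \eqref{eq:gsmall} to handle the $g$-contributions $\sum_\diamond \langle\pi,g\rangle_\diamond$ yields the claimed differential inequality.

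To close the argument I would run a continuity/bootstrap on $[0,N]$. Define $T^* := \sup\{t \in [0,N] : \mathbf{E}(s) \le \tilde{C}\eps^3 e^{-b\eps(N-s)}\text{ for all } s \in [0,t]\}$. Since $\mathbf{E}(0)=0$, $T^* > 0$, and if $T^* < N$ then by continuity $\mathbf{E}(T^*) = \tilde{C}\eps^3 e^{-b\eps(N-T^*)}$. On $[0,T^*]$ the bound $\mathbf{E}(t)^{1/2} \le \tilde{C}^{1/2}\eps^{3/2}$ lets me absorb the cubic term into the linear one, reducing the differential inequality to $\mathbf{E}'(t) \le C'\eps^{3/2}\mathbf{E}(t) + C\eps^{4+\eta}e^{-b\eps(N-t)}$. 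Choosing $\eps_0$ so small that $C'\eps^{3/2} \le b\eps/2$, an integrating-factor computation with $\mathbf{E}(0)=0$ yields $\mathbf{E}(t) \le (2C/b)\,\eps^{3+\eta}e^{-b\eps(N-t)}$, which for $\eps$ sufficiently small is strictly smaller than $\tilde{C}\eps^3 e^{-b\eps(N-t)}$. This contradicts $T^* < N$, so $T^* = N$ and \eqref{eq:lem1est} holds throughout $[0,N]$.

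The delicate part of the argument is the weighted-energy computation: I must verify that the mismatch between $(S-I)^*$ and $-(I-S^{-1})$ in $\langle\cdot,\cdot\rangle_\pm$ is genuinely of the same $O(\eps^{3/2})$ order as the weight-drift term, so that both can be strictly beaten by the $b\eps$ rate that appears in the target bound. The exponent $\eps^{3/2}$ in the weight is precisely calibrated to make this work, and it is essential that the forcing in \eqref{eq:gsmall} scales like $\eps^{4+\eta}$ (rather than $\eps^4$), so that the Gronwall estimate loses a factor of only $\eps^{-1}$ and lands below $\eps^3$. Once these calibrations are checked, the bootstrap is routine.
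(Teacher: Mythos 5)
Your proposal is correct and follows essentially the same route as the paper: the same decomposition of $\dot{\mathbf{E}}$ (weight-drift term, exact cancellation of the unweighted wave terms by antisymmetry, the $O(\eps^{3/2})$ adjoint mismatch $(e^{\mp 2\eps^{3/2}}-1)\langle \pi, S^{-1}\rho\rangle_\pm$ in the weighted inner products, and the hypothesis \eqref{eq:gsmall} for the forcing), followed by the same bootstrap-plus-Gronwall closure with $\mathbf{E}(0)=0$. The only cosmetic difference is that your continuity set carries the exponential factor $e^{-b\eps(N-s)}$ while the paper bootstraps on the cruder bound $\mathbf{E}(t)<\tilde{C}\eps^3$; both suffice to absorb the $\mathbf{E}^{3/2}$ term.
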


\begin{proof}
Compute
\[ \ba{lllr}
\dot{{\bf E}}(t) & = & c_+ \eps^{3/2}(\|\rho(t)\|_+^2 + \|\pi(t)\|_+^2) - c_- \eps^{3/2}( \|\rho(t)\|_-^2 + \|\pi(t)\|_-^2) & (i) \\ \\
& & + \langle \rho, (S - I) \pi \rangle + \langle \pi, (I - S^{-1})\rho \rangle & (ii)\\ \\
& & + \langle \rho, (S - I) \pi \rangle_+  + \langle \pi, (I - S^{-1})\rho \rangle_+ & (iii) \\ \\
& & + \langle \rho, (S - I) \pi \rangle_- + \langle \pi, (I - S^{-1})\rho \rangle_- & (iv) \\ \\
& & + \langle \pi, g \rangle + \langle \pi ,g \rangle_+ + \langle \pi, g \rangle_- & (v)
\ea
\]
Note that $(i)$ is bounded by $(c_+ + c_-)\eps^{3/2}{\bf E}(t)$ and $(ii)$ is identically zero.  Since the shift $S$ is not self-adjoint on the weighted spaces, we estimate more carefully
\[ |(iii)| = |\langle (S^* - S^{-1})\rho,\pi \rangle_+| = |(e^{-2\eps^{3/2}} - 1) \langle S^{-1}\rho,\pi \rangle_+| \le C_0\eps^{3/2} \| \rho \|_+ \| \pi \|_+ 
\]
and similarly $|(iv)| \le C_0 \eps^{3/2} \|\rho\|_- \|\pi\|_-$.
The estimate $|(v)| \le C\eps^{3/2} \mathbf{E} + C\eps^{4+\eta}  e^{-b\eps|N - t|}$ follows immediately from $\eqref{eq:gsmall}$ and the Cauchy-Schwartz inequality as long as ${\bf E}(t) < \tilde{C} \eps^3 $.  
Choose and fix $\tilde{C} > 0$ and define $T := \sup\{ t \in [0,N] \; | \; {\bf E}(t) < \tilde{C} \eps^3 \}$ so that  

\[ \left\{ 
\ba{l} 
\dot{{\bf E}} \le C_3 \eps^{3/2} {\bf E} + C\eps^{4+\eta}  e^{-b\eps(N - t)} \qquad t \in [0,T] \\ \\
{\bf E}(0) = 0 
\ea \right.
\]

Apply Gronwall's inequality to see that 
\[ \ba{lllr} {\bf E}(t) & \le & C\eps^{4 + \eta} \int_0^t e^{C_2\eps^{3/2}(t-s)} e^{-b \eps (N - s)} ds & t \in [0,T] \\ \\
 & \le &\frac{C \eps^{4 + \eta} }{b\eps - C_2 \eps^{3/2}} e^{C_2 \eps^{3/2} t - b\eps N} (e^{(b\eps - C_2\eps^{3/2})t} - 1) & t \in [0,T]\\ \\
& \le & \frac{C\eps^{3 + \eta} }{b - C_2 \eps^{1/2}} e^{-b\eps(N - t)} \le \tilde{C} \eps^3 e^{-b\eps(N - t)}
& t \in [0,T] \ea 
\]
where in the last inequality we have taken $\eps$ sufficiently small.  Thus {\it a posteriori} we may take $T = N$.  This establishes $\eqref{eq:lem1est}$ and thus completes the proof.
\end{proof}

The following simple lemma establishes that the weighted space with norm ${\bf E}$ embeds compactly into $\ell^2$.
\begin{lemma} \label{lem:cpt}
Let $a > 0$ be given.  Let $X \subset \ell^2$ enjoy the estimate 
\[ R(X) := \sup_{x \in X} \sum_{j \in \Z} x_j^2(1 + e^{aj} + e^{-aj}) < \infty.\]  Then $X$ is compact in $\ell^2$.
\end{lemma}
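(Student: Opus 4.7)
The plan is to verify the two standard sufficient conditions for precompactness of a subset of $\ell^2(\Z)$ (Fr\'echet--Kolmogorov type): uniform boundedness of the $\ell^2$ norm and equi-tightness, i.e.\ uniform smallness of tails outside a sufficiently large window. Combined with a Cantor diagonal extraction, this delivers an $\ell^2$-convergent subsequence of every sequence in $X$, yielding (relative) compactness.

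Boundedness is immediate: since $1 \le 1 + e^{aj} + e^{-aj}$, every $x \in X$ satisfies $\|x\|^2 \le R(X)$, and in particular $|x_j| \le \sqrt{R(X)}$ for each fixed $j \in \Z$. The central elementary observation is that $1 + e^{aj} + e^{-aj} \ge e^{a|j|}$, so for any $N > 0$ and any $x \in X$,
\[
\sum_{|j|>N} x_j^2 \;\le\; e^{-aN}\sum_{|j|>N} x_j^2 \bigl(1 + e^{aj} + e^{-aj}\bigr) \;\le\; e^{-aN} R(X),
\]
which tends to $0$ as $N \to \infty$ uniformly in $x \in X$. This is the heart of the proof and is where the hypothesis $a > 0$ is used.

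Given a sequence $\{x^{(n)}\} \subset X$, the coordinate-wise bound $|x^{(n)}_j| \le \sqrt{R(X)}$ for each $j \in \Z$ permits Cantor's diagonal argument to extract a subsequence $\{x^{(n_k)}\}$ that converges pointwise on $\Z$ to some $\bar{x} \in \R^{\Z}$. Fatou's lemma applied to $\sum_j x_j^2 (1 + e^{aj} + e^{-aj})$ shows $\bar{x}$ again satisfies the weighted bound by $R(X)$, so in particular $\bar{x} \in \ell^2$ and obeys the same tail estimate as the $x^{(n_k)}$. To upgrade pointwise to $\ell^2$ convergence, fix $\varepsilon > 0$, choose $N$ with $e^{-aN}R(X) < \varepsilon/8$ so that the tails of both $x^{(n_k)}$ and $\bar{x}$ beyond $|j|=N$ are controlled, and then use that the finite-dimensional sum $\sum_{|j|\le N}(x^{(n_k)}_j - \bar{x}_j)^2$ tends to zero as $k \to \infty$ by pointwise convergence on a finite set.

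The argument is essentially routine; there is no real obstacle beyond noting that the weight $1 + e^{aj} + e^{-aj}$ grows exponentially in $|j|$, which automatically enforces tightness.
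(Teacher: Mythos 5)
Your proof is correct and rests on the same key observation as the paper's: the weight $1+e^{aj}+e^{-aj}\ge e^{a|j|}$ forces $\sum_{|j|>N}x_j^2\le e^{-aN}R(X)$ uniformly over $X$, after which compactness in the remaining finitely many coordinates finishes the job. The only difference is packaging---the paper verifies total boundedness by exhibiting a finite $\eta$-net for the truncated set, while you extract an $\ell^2$-convergent subsequence via a diagonal argument---so the two arguments are essentially the same.
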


\begin{proof}
We show that for each $\eta > 0$ there is an $N = N(\eta)$ and a discrete set $D = D(\eta) = \{x_1, \cdots x_N\}$ such that the $\eta$-neighborhood of $D$ contains $X$.
Let $\eta > 0$ be given.  Choose $M$ so large so that for any $x \in X$, we have $\sum_{|j| > M} |x_j|^2 < \eta$.  Such a choice is possible because $R(X)$ is finite.  Now let $D$ be a discrete set such that the $\eta$-neighborhood about $D$ contains the ball of radius $R(X)$ in $\{ x \in \ell^2 \; | \; x_j = 0 \mbox{ for } |j| > M\} \cong \R^{2M+1}$.  Such a set exists because $\R^{2M + 1}$ is locally compact.  Note that the $\eta$-neighborhood of $D$ contains $X$ as desired.
\end{proof}

The following lemma establishes that a key cross term is exponentially small.  We make use of this lemma in verifying the smallness condition $\eqref{eq:gsmall}$ in the hypotheses of Lemma $\ref{lem:dwe}$.

\begin{lemma} \label{lem:small2}
Let $y_j(t) := r_{c_+}(j-c_+(N-t) )r_{c_-}'(j-c_-(N-t)  - \sigma)$ for some $\sigma \in (0,1)$.  Then there is a $b > 0$ which may be chosen independent of $\eps$ and $\sigma$  such that the estimate 
\begin{equation}
 \|y(t)\|_+ + \|y(t)\|_- + \|y(t)\| < C \eps^{9/2}  e^{-b\eps (N - t)} \qquad t < N 
\label{eq:small2}
 \end{equation}
is valid.
\end{lemma}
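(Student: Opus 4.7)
The plan relies on the exponential localization of the Friesecke--Pego solitary waves, namely on pointwise bounds of the form $|r_c(x)| \le C\eps^2 e^{-b\eps|x|}$ and $|r_c'(x)| \le C\eps^3 e^{-b\eps|x|}$ for a decay rate $b > 0$ that may be chosen independent of $\eps$. Writing $A := c_+(N - t)$ and $B := c_-(N - t) + \sigma$, so that $y_j$ is the product of a profile centered at $A$ and one centered at $B$, these bounds immediately yield
\[
|y_j(t)| \le C\eps^5 e^{-b\eps(|j - A| + |j - B|)}.
\]
Since $c_+$ is close to $+1$ and $c_-$ is close to $-1$, we have $c_+ - c_- > 2$, so that $A - B = (c_+ - c_-)(N - t) - \sigma \ge 2(N - t) - 1$; in particular $|A - B|$ and $N - t$ are comparable once $N - t \ge 1$.

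The main algebraic step is the elementary identity $|j - A| + |j - B| = |A - B| + 2\,\mathrm{dist}(j, [B, A])$, valid when $A \ge B$. Substituting this into the bound for $y_j$ produces
\[
\|y(t)\|^2 \le C\eps^{10} e^{-b\eps |A - B|} \sum_{j \in \Z} e^{-2b\eps\,\mathrm{dist}(j, [B, A])} \le C\eps^{10} e^{-b\eps|A - B|}\bigl(|A - B| + 1/\eps\bigr),
\]
the final factor counting the integers between $B$ and $A$ together with the geometric tails. Since $|A - B|e^{-b\eps|A-B|/2}$ is $O(1/\eps)$, a slight degradation of the exponential rate yields $\|y(t)\|^2 \le C\eps^9 e^{-b'\eps|A - B|}$; combining with the lower bound on $|A-B|$ and dispatching the case $N - t < 1$ trivially produces $\|y(t)\|^2 \le C\eps^9 e^{-b''\eps(N - t)}$, which is the unweighted part of \eqref{eq:small2}.

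For the weighted norms $\|y\|_\pm$, the weight $e^{\pm 2\eps^{3/2}(j - c_\pm(N - t))}$ has exponential growth rate $\eps^{3/2}$, which is strictly smaller than the decay rate $\eps$ of the solitary wave profile for $\eps_0$ sufficiently small. This weight may therefore be absorbed into the factor $e^{-b\eps|j - c_\pm(N - t)|}$ coming from the solitary wave centered at (or within distance $\sigma < 1$ of) the point $c_\pm(N - t)$, at the cost of replacing $b$ by $b - O(\eps^{1/2})$. The same computation as above, carried out with this slightly weaker decay rate, bounds $\|y\|_\pm$ by $C\eps^{9/2} e^{-b'''\eps(N - t)}$, completing the proof. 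The only subtlety is to keep $\eps^{1/2}$ much smaller than $b$ uniformly in $\sigma \in (0, 1)$, which is automatic once $\eps_0$ is fixed small enough; no estimate deeper than the Friesecke--Pego localization of $r_c$ and $r_c'$ is needed.
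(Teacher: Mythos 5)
Your proof is correct and follows essentially the same route as the paper's: both rest solely on the Friesecke--Pego exponential localization of $r_c$ and $r_c'$, bound the product by splitting according to the position of $j$ relative to the two wave centers (your identity $|j-A|+|j-B| = |A-B| + 2\,\mathrm{dist}(j,[B,A])$ is just a tidier packaging of the paper's explicit three-regime case analysis, yielding the same $\eps^{10}\bigl(|A-B|+\eps^{-1}\bigr)e^{-b\eps|A-B|}$ bound), and both absorb the $e^{\pm 2\eps^{3/2}(\,\cdot\,)}$ weights by noting that the growth rate $\eps^{3/2}$ is negligible compared with the $O(\eps)$ decay rate of the profiles once $\eps_0$ is small.
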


\begin{proof}
It is known from \cite{friesecke:1999} that the wave profile $r_c$ and its derivative $r_c'$ enjoy the exponential estimates
\[ |r_c(\xi)| < C\eps^2 e^{-\eps b_0 |\xi|} \qquad |r_c'(\xi)| < C\eps^3 e^{-\eps b_0 |\xi|} \]
for some positive constants $C$ and $b_0$ which may be chosen independently of $\eps$.  Thus 
\[ |y_j(t) | < C \eps^5 e^{- b_0 \eps |j-c_+(N-t)|} e^{-b_0 \eps |j - c_-(N-t) |} \]
There are three regimes to consider: (i) \; $j > c_+(N-t)$, (ii) $c_-(N-t) < j < c_+(N-t)$ and (iii) $j < c_-(N-t)$.  We first compute the $\ell^2$ norm:
\[
\ba{lll}
\sum_{j \in \Z} y_j(t)^2 & = & 
\displaystyle{
\sum_{j < c_-(N-t) } y_j(t)^2 + \sum_{c_-(N-t)  \le j \le c_+(N-t) } y_j(t)^2 + \sum_{j > c_+(N-t) } y_j(t)^2 
}
\\ \\
& \le & 
\displaystyle{
C\eps^{10}e^{-b_0\eps \left[(c_+ - c_-)(N-t) \right]}\sum_{k > 0} e^{-2b_0 \eps k} 
} \\ \\
& & 
\displaystyle{
+ C\eps^{10}e^{-b_0\eps\left[(c_+ - c_-)(N-t) \right]} \sum_{c_-(N-t) \le j \le c_+(N-t) } 1
}
\\ \\ 
& & + 
\displaystyle{
C\eps^{10}e^{-b_0\eps\left[(c_+ - c_-)(N-t) \right]} \sum_{l > 0} e^{-2 b_0 \eps l}
}
\\ \\
& \le & C\eps^{10}e^{-b_0\eps\left[(c_+ - c_-)(N-t) \right]}\left( (c_+ - c_-)(N-t)  + \sum_{j \in \Z} e^{-2b_0 \eps |j|}\right) \\ \\
& \le & C\eps^9 e^{-b_0 \eps (c_+ - c_-)(N-t)} 
\ea
\]
After the first inequality we have changed indices $k = c_-(N-t)  - j$ and $l = j - (c_+(N-t) )$.  In the last line we have used the fact that $\sup_{x > 0} xe^{-ax} < Ca^{-1}$ for any $a > 0$ and in particular for $a = b_0 \eps$.  We have also used the fact that for $a > 0$, $\sum_{k > 0} e^{-a k} < Ca^{-1}$ with $a = 2b_0 \eps$.  We now compute the $\| \cdot \|_+$ norm:
\[ \ba{lll} 
\|y(t) \|_+^2 & = & \eps^{10} e^{-2\eps^{3/2} c_+(N-t)} e^{-2b_0 \eps[(c_+ - c_-)(N-t) ]} \left( \sum_{k > 0} e^{2\eps^{3/2}(k + c_+(N-t) )}e^{-4b_0\eps k} \right. \\ \\
& & + 
\left. \sum_{l > 0} e^{2\eps^{3/2}(c_-(N-t) - l)} e^{-4b_0 \eps l} 
+ \sum_{j \in (c_-(N-t) , c_+(N-t) )} e^{2\eps^{3/2}j} \right) \\ \\
& \le & C\eps^{10} e^{-\frac{3 b_0 \eps}{2} [(c_+ - c_-)(N-t) ]}\left( \eps^{-1} +  [(c_+ - c_-)(N-t) ]\right) \\ \\
& \le & C\eps^9  e^{-b_0 \eps (c_+ - c_-)(N-t)}
\ea 
\]
Here we have used the fact that $\eps^{3/2}c_+ < \frac{b_0 \eps}{2}(c_+ - c_-)$ for small $\eps$ to ignore all of the powers of $\eps^{3/2}$ and essentially reduce this computation to that of the $\ell^2$ norm above.  The estimate for $\| y(t) \|_-^2$ is similar.  After taking square roots, we obtain $\eqref{eq:small2}$; this concludes the proof.

\end{proof}

The following lemma will be used to establish that the inhomogeneous terms which drive the evolution of the perturbation of two well-separated solitary waves are bounded by a cross term of
the form shown to exponentially small in Lemma \ref{lem:small2}.
We leave the proof as an exercise.
\begin{lemma} \label{lem:MA511}
Let $f : \R \to \R$ be $C^2$ and fix $0$.  Then the function $G : \R^2 \to \R^2$ given by
\begin{equation}
G(x,y) := \left\{ \ba{lr} \frac{f(x+y) - f(x) - f(y)}{xy} & x,y \ne 0 \\ \\
\frac{f'(y)}{y} & x = 0, y \ne 0 \\ \\
\frac{f'(x)}{x} & x \ne 0, y = 0 \\ \\
f''(0) & x = y = 0 \ea \right.
\label{eq:Gdef}
\end{equation}
is Lipschitz continuous in any bounded set in $\R^2$. 
\end{lemma}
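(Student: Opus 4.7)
The strategy is to replace the piecewise definition in $\eqref{eq:Gdef}$ by a single integral formula in which the apparent singularities on the axes $\{xy=0\}$ are manifestly removable. Using the normalization $f(0)=0=f'(0)$ implicit in the statement, two applications of the fundamental theorem of calculus give
\[
f(x+y)-f(x)-f(y) \;=\; \int_0^y\bigl[f'(x+s)-f'(s)\bigr]\,ds \;=\; \int_0^y\!\int_0^x f''(t+s)\,dt\,ds.
\]
After the rescaling $t=\sigma x$, $s=\tau y$ this equals $xy\int_0^1\!\int_0^1 f''(\sigma x+\tau y)\,d\sigma\,d\tau$, so whenever $xy\ne 0$,
\[
G(x,y) \;=\; \int_0^1\!\int_0^1 f''(\sigma x+\tau y)\,d\sigma\,d\tau.
\]

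The next step is a quick check that this same integral also reproduces the other three branches of $\eqref{eq:Gdef}$: at $x=0$ it collapses to $\int_0^1 f''(\tau y)\,d\tau = (f'(y)-f'(0))/y = f'(y)/y$ (using $f'(0)=0$); the case $y=0$ is symmetric; and the value $f''(0)$ at the origin is immediate. Thus $G$ is \emph{globally} the double average of $f''$ over the parallelogram with vertices $0,x,y,x+y$, and is continuous on all of $\R^2$.

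Lipschitz continuity on a bounded set $B\subset\R^2$ then falls out for free. On $B$ the argument $\sigma x+\tau y$ lies in a fixed compact interval $K$, on which $f''$ has a finite Lipschitz constant $L$ (this is the mild additional regularity one silently uses beyond bare $C^2$; in the FPU application the interaction potential is smooth enough that this is automatic). Estimating inside the integral gives
\[
|G(x_1,y_1)-G(x_2,y_2)| \;\le\; L\int_0^1\!\int_0^1\bigl(\sigma|x_1-x_2|+\tau|y_1-y_2|\bigr)\,d\sigma\,d\tau \;=\; \tfrac{L}{2}\bigl(|x_1-x_2|+|y_1-y_2|\bigr),
\]
which is the claim. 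The only conceptual step is recognizing the integral representation of $G$; once the singular-looking denominator $xy$ has been absorbed against the double antiderivative, everything else reduces to an elementary estimate on a continuous integrand, which is presumably why the author leaves this as an exercise.
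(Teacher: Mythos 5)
The paper itself offers no proof of this lemma (it is explicitly left as an exercise), so there is nothing to compare your argument against; your double-integral representation
\[
G(x,y)=\int_0^1\!\!\int_0^1 f''(\sigma x+\tau y)\,d\sigma\,d\tau ,
\]
derived from $f(0)=0$ by two applications of the fundamental theorem of calculus, is the standard way to do it and the derivation is correct. Two remarks. First, the caveat you flag in passing is in fact the substantive point: for $f$ merely $C^2$ this representation gives \emph{continuity} of $G$, hence boundedness on bounded sets --- which is all the paper ever uses, since the application only needs $|V'(x+y)-V'(x)-V'(y)|\le C|xy|$ for bounded arguments --- but it does \emph{not} give Lipschitz continuity. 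Taking $f$ with $f''(u)=|u|^{1/2}$ (so $f\in C^2$, $f(0)=f'(0)=0$) yields $G(x,x)=c\,|x|^{1/2}$ while $G(0,0)=0$, so the lemma as literally stated fails without assuming $f''$ locally Lipschitz (e.g.\ $f\in C^3$, which is harmless in the FPU setting where $V'''(0)\neq 0$ is already assumed). You were right not to pretend that the final estimate $|G(x_1,y_1)-G(x_2,y_2)|\le \tfrac{L}{2}(|x_1-x_2|+|y_1-y_2|)$ follows from continuity of $f''$ alone. Second, a smaller point: the axis branches as printed, $f'(y)/y$ and $f'(x)/x$, agree with the off-axis limit only when $f'(0)=0$; your formula gives $G(0,y)=(f'(y)-f'(0))/y$. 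In the paper's application $f=V'$ has $f'(0)=V''(0)=1$, so the piecewise definition in \eqref{eq:Gdef} is not actually the continuous extension there; again this is immaterial to the way the lemma is used, but your assumption $f'(0)=0$ is genuinely an extra normalization rather than something contained in ``fix $0$''.
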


\section{The proof of Theorem $\ref{thm:main}$}
We are now ready to prove Theorem $\ref{thm:main}$.

\begin{proof}[Proof of Theorem $\ref{thm:main}$]
In broad strokes, the proof proceeds as follows.  Ultimately we wish to have a 
pair of solitary waves $u_{c_+}$ and $u_{c_-}$ moving away from each other to the right and
left respectively.  However, for the first step in the proof we consider their evolution {\em toward}
each other from a widely separated initial state.  Note that because of the time
reversibility of the FPU equations, if the solution $u_{c_+}(\cdot,t)
= (r_{c_+}(\cdot,t),p_{c_+}(\cdot,t))$ is a solitary wave which propagates to the right
with speed $c_{+} >1$, then $U^{R}_{c_+}(\cdot,t)
= (r_{c_+}(\cdot,-t),-p_{c_+}(\cdot,-t))$ is a solitary wave propagating to the left with speed
$-c_+$.  Likewise we introduce a solution $U^{L}_{c_-}$ which is the ``time reversal''
of the solitary wave $u_{c_-}$ and represents a solitary
wave propagating to the right.  With this notation we consider a sequence of solutions of
the FPU equations, $\{ u^N(\cdot,t) \}$, with initial conditions
$$
u^N(\cdot,0) = U^R(\cdot - c_+ N,0)+U^L_{c_-}(\cdot-c_- N,0)\ .
$$
Let this initial data evolve for time $N - T_0$, at which point it may be written as the linear superposition of colliding waves which remain separated by a distance of $(c_+ - c_-)T_0$ plus an error term.  We prove that this error term is uniformly small in $N$ in a space which embeds compactly into $\ell^2$.  Thus the sequence $u^N(N - T_0)$ converges in $\ell^2$ along a subsequence $N_j$ to some limit.  Use reversibility to make the waves in this limiting initial datum move apart rather than together and call the result $\tilde{u}_0$; its forward evolution $u$ is the solution that we seek.  At this point we rely on Theorem $\ref{thm:os}$ which says that so long as initial data starts close to the sum of two well-separated solitary waves which are moving apart from each other, the evolution remains close to the sum of these two solitary waves for all time.  Moreover, the distance to the sum of two solitary waves is a constant multiple of the initial distance plus a term which is exponentially small in the separation.  

We now use the fact that the solutions $u^{N_j}(t)$ converge
to the solution $u(t)$ whose initial condition is $u_0$.  Since
$u^{N_j}(t)$ is equal to a pair of solitary waves when $t=N_j$,
we find a sequence of times $t_j$, tending toward infinity,
such that for $t> t_j$, $u(t)$ is more and more closely approximated
by a pair of solitary waves.  Allowing $t$ to go to infinity
we find that $u(t)$ converges to a pair of solitary waves
with {\em no} radiation.  We note that this general method
of using apriori bounds to derive solutions with specified
properties at $t \to \infty$ has been used in a dissipative
context in Hamel and Nadirashvili \cite{hamel:1999}, and in the gKdV
equation by Martel \cite{martel:2005}.

We now present the details.  Let $u^N(j,0) := U^R_{c_+}(j - c_+N) + U^L_{c_-}(j - c_-N )$ and let $u^N(\cdot,t)$ be the evolution of $u^N(\cdot,0)$ under the FPU flow.  Define the residual $(\rho^N,\pi^N)$ by $u^N(j,t) = U^R_{c_+}(j  - c_+N + c_+t) + U^L_{c_-}(j - c_-N + c_-t) + (\rho^N(j,t),\pi^N(j,t))$.

Suppress the superscript $N$.  The residual $(\rho,\pi)$ satisfies the equation
\[ \left\{ \ba{l} \dot{\rho} = (S - I)\pi  \\ \\ \dot{\pi} = (I - S^{-1})\left[ V'(r_{c_+} + r_{c_-} + \rho) - V'(r_{c_+}) + V'(r_{c_-}) \right] \\ \\
\rho(0) = \pi(0) = 0 \ea \right. \]
where $u_c = (r_c,p_c)$ and $r_{c_+}$ and $r_{c_-}$ are evaluated at $j - c_+N  + c_+t$ and $j - c_-N  + c_-t$ respectively .  

Rewrite the right hand side of the equation for $\dot{\pi}$ as
\[(1-S^{-1}) \rho + (1-S^{-1})( V'(r_{c_+} + r_{c_-} + \rho) - V'(r_{c_+}) - V'(r_{c_-})- \rho).\]
The quantity $V'(r_{c_+} + r_{c_-} + \rho) - V'(r_{c_+}) - V'(r_{c_-})- \rho$ will play the role
of $g(\pi,\rho,t)$ in Lemma \ref{lem:dwe}, and we begin by decomposing it as:
\[ \ba{lllr}
V'(r_{c_+} + r_{c_-} + \rho) - V'(r_{c_+}) - V'(r_{c_-})- \rho & =  & 
V'(r_{c_+} + r_{c_-} + \rho) - V'(r_{c_+} + r_{c_-}) - V''(r_{c_+} + r_{c_-})\rho  & (i) \\ \\
& &  + (V''(r_{c_+} + r_{c_-}) - 1)\rho & (ii) \\ \\
& & + 
 V'(r_{c_+} + r_{c_-}) - V'(r_{c_+}) - V'(r_{c_-}) & (iii) \\ \\
\ea \]
From Taylor's theorem with remainder  we see that $|(i)| < C\rho^2$.  So
\begin{equation}\label{eq:iest}
\| ( 1 - S^{-1} )(i) \|_{\diamond} \le C \| \rho^2 \|_{\diamond}
\end{equation}
for $\diamond \in \{\ell^2,+,-\}$.  In like fashion, we observe that
$V''(0) = 1$ that $|(ii)| < C|(r_{c_+} + r_{c_-})\rho|$ and hence
\begin{equation}\label{eq:iiest}
\| ( 1 - S^{-1} )(ii) \|_{\diamond} \le C \| (r_{c_+} + r_{c_-})\rho \|_{\diamond}
\le C (\|r_{c_+}\|_{\ell^{\infty}} + \| r_{c_-} \|_{\ell^{\infty}}) \| \rho \|_{\diamond}
\le C \eps^2 \| \rho \|_{\diamond}
\end{equation}
The estimates on term $(iii)$ are somewhat more subtle.
First use Lemma $\ref{lem:MA511}$ to conclude
\[(I-S^{-1})(iii) \le C(I-S^{-1})r_{c_+}r_{c_-}. \]
Next, note that 
\[\ba{lll} 
(I - S^{-1})r_{c_+}(\cdot - \xi_+)r_{c_-}(\cdot - \xi_-) & = &
r_{c_+}(\cdot - \xi_+)(I-S^{-1})r_{c_-})(\cdot - \xi_-) + (S^{-1}r_{c_-}(\cdot - \xi_-))(I-S^{-1})r_{c_+}(\cdot - \xi_+) \\ \\
 &  = & r_{c_+}(\cdot - \xi_+)r_{c_-}'(\cdot - \zeta_-) + r_{c_-}(\cdot - \xi_- - 1)r_{c_+}'(\cdot - \zeta_+).
\ea
\]
Here $\xi_\pm = c_\pm(N-t) $ and $\zeta_\pm \in (\xi_\pm -1,\xi_\pm)$; in the second line we have used the Mean Value Theorem.  It now follows from Lemma $\ref{lem:small2}$ that 
\[ \| (I-S^{-1}) r_{c_+}r_{c_-} \|_\diamond < C\eps^{9/2} e^{-b\eps(N-t)}, \qquad t < N,  \]
for any $\diamond \in \{\ell^2,+,-\}$.  

Multiplying  the above bounds by
$\| \pi \|_{\diamond}$ and then
summing over $\diamond \in \{\ell^2,+,-\}$, we see that $\eqref{eq:gsmall}$ is satisfied 
and thus if we fix some $T_0 > 0 $,
 $(\rho^N(N - T_0),\pi^N(N - T_0))$ is bounded uniformly in the space $\ell^2 \cap \ell^2_+ \cap \ell^2_-$.  We established in Lemma $\ref{lem:cpt}$ that $\ell^2 \cap \ell^2_+ \cap \ell^2_-$ embeds compactly in $\ell^2$; it follows that $(\rho^N(N - T_0),\pi^N(N - T_0))$ converges strongly in $\ell^2$ along a subsequence $N_j$ to some limit $(\tilde{\rho},\tilde{\pi})$.  
 
 Note that the limiting solution consists of a pair of solitary waves at positions $c_+ T_0$
 and $c_- T_0$, plus a small remainder.  We will assume that $T_0$ is large enough that
 the ``overlap'' of the tails of the waves remains small.  As described above we
 now consider this limiting solution as the initial condition for a solution in which the
 two solitary waves move apart from one another.  Once again, we make use of the
 reversibility of the FPU equation (and the definition of $U^R_{c_+}$ and $U^L_{c_-}$)
 and write
\[
\bar{u}_0(j) = u_{c_+}(j - c_+ T_0) + u_{c_-}(j - c_- T_0) + (\tilde{\rho}( j),-\tilde{\pi}( j))
\]
and let $u$ be the solution of $\eqref{eq:FPU}$ with $u(\cdot,0) = \bar{u}_0$.  Note that due to reversibility and continuous dependence on initial conditions $u(\cdot,t)$ is close to 
$\bar{u}^N(\cdot,N- T_0 - t)$, where $\bar{u}^{N_j}$
represents the ``time reversed'' analogue of $u^{N_j}$ - i.e.
we insert a minus sign in front of second component of the solution (the $p$-component)
to account for the reversal of time.  In particular, at time $t$,
$u$ will be close to a pair of ``outward'' propagating solitary waves located at 
$c_{\pm} (T_0+t)$.   

The argument proceeds by using the orbital stability result, Theorem $\ref{thm:os}$ to control the forward evolution of $u$ for all time.  Due to the presence of the $\eps^{-b\eps T}$ term on the right hand sides of $\eqref{eq:os1}$ and $\eqref{eq:os2}$, it is not useful to regard $\bar{u}_0$ as the initial condition for $u$.  Instead, we consider a sequence of times $T_k \to \infty$ and regard $u(\cdot,T_k)$ as the initial condition for $u$.  The exponentially small terms in $\eqref{eq:os1}$ and $\eqref{eq:os2}$ are now order $\eps^{-b \eps T_k}$ and upon letting $k \to \infty$ they vanish.  The remainder of the proof makes this argument precise.

Choose sequences $\{ \delta_k \}$ and 
$\{ T_k \}$ such that $\delta_k \to 0$ and $T_k \to \infty$.
Then by continuous dependence on initial conditions there exists $N_k > T_k$ such that 
\begin{equation}
\| u(\cdot,T_k) - \bar{u}^{N_k}(\cdot,N_k-T_k-T_0) \| < \delta_k
\end{equation}

Now write 
\begin{eqnarray}\nonumber
&& \bar{u}^{N_k}(\cdot, N_k-T_k) = u_{c_+}(\cdot + c_+(N_k-T_k-T_0) )
+ u_{c_-}(\cdot + c_-(N_k-T_k-T_0) )  \\  \nonumber && \qquad \qquad + (\rho^{N_k}(\cdot,N_k-T_k-T_0),
-\pi^{N_k}(\cdot,N_k-T_k-T_0) )
\end{eqnarray}

By Lemma $\ref{lem:dwe}$ we have
\begin{equation}
\| (\rho^{N_k}(\cdot,N_k-T_k-T_0),-\pi^{N_k}(\cdot,N_k-T_k-T_0) ) \| < C \eps^3 e^{-b \eps (T_k+T_0)}
\end{equation}

Thus
\begin{equation}
\| u(\cdot,T_k) - u_{c_+}(\cdot + c_+(N_k-T_k) - c_+ N_k)
+ u_{c_-}(\cdot + c_- (N_k-T_k) - c_-  N_k) \| < \delta_k + C \eps^3 e^{-b \eps T_k} 
:= \tilde{\delta}_k
\end{equation}

Now apply Theorem $\ref{thm:os}$ with $u_0= u(\cdot,T_k)$.  Then for all $t>T_k$ we have functions $c_\pm^k$ and $\tau_\pm^k$ such that 
\begin{equation}
\| u(t) - u_{c_+^k(t)}(\cdot - \tau_+^k(t) ) - u_{c_-^k(t)}(\cdot - \tau_-^k(t) ) \|
< C \eps^{-3/2} \tilde{\delta}_k + C e^{-\eps T_k}
\end{equation}
holds.  
It remains to replace the $k$-dependent modulation parameters $\tau_\pm^k$ and $c_\pm^k$ with $k$-independent quantities.

In light of $\eqref{eq:os2}$ and the fact that $c \mapsto u_c$ is smooth, we may replace $c^k_\pm(t)$ with $c_\pm$ at the cost of an additional $C\eps^{-4}\tilde{\delta}_k + Ce^{-\eps T_k}$ on the right hand side.  Now define $\gamma_\pm^k(t) := \tau_\pm^k(t) - c_\pm t$.  In light of $\eqref{eq:os2}$ and $\eqref{eq:os3}$ we see that 
$|\dot{\gamma}_\pm^k(t)| < C\eps^{-4}\delta_k + Ce^{-b\eps T_k}$.
In particular, $\dot{\gamma}_\pm^k \to 0$ uniformly as $k \to \infty$.  Taking $k \to \infty$ we find that $u$ converges to a pair of solitary waves.  This completes the proof.
\end{proof}

{\bf Acknowledgements:} This work was funded in part by the National Science Foundation under grants DMS-0603589 and DMS-0405724.

\bibliography{AsymptoticBib}{}

\end{document}